\newtheoremstyle{mytheoremstyle}{2pt}{1pt}{\itshape}{}{\bfseries}{.}{.5em}{} 
\theoremstyle{mytheoremstyle}
\xpatchcmd{\proof}{\topsep6\p@\@plus6\p@\relax}{}{}{}
\DeclareFontFamily{T1}{lcmtt}{\hyphenchar\font\m@ne}
\DeclareFontShape{T1}{lcmtt}{m}{n}{%
	<13.82><16.59><9><23.89><28.66><34.4><41.28>%
	ecltt8}{}
\setlist[description]{leftmargin=\parindent,labelindent=\parindent}
\newtheorem{definition}{Definition}
\newtheorem{theorem}{Theorem}
\newtheorem{example}{Example}
\newtheorem{property}{Property}
\newcommand{\subparagraph}{}
\newcommand*{\QEDA}{\ensuremath{\qed}}%
\titlespacing*{\subsubsection}{0pt}{0.5ex plus 0pt minus 0.5ex}{0.5ex plus 0pt minus 0.5ex}
\title{Recurrence in Dense-time AMS Assertions}
\author{Sayandeep Sanyal, \IEEEmembership{Student Member, IEEE}, Antonio~Anastasio~Bruto~da~Costa, \IEEEmembership{Student Member, IEEE}, \\
	Pallab Dasgupta, \IEEEmembership{Senior Member, IEEE} \\
	Formal Methods Laboratory, Department of Computer Science and Engineering,\\ 
	Indian Institute of Technology Kharagpur
	\thanks{The authors thank Intel corporation CAD SRS funding for partial support of this research. The authors also acknowledge Antara Ain for building the initial framework for the tool.}
}
\begin{document}
	
	\maketitle
	
	\begin{abstract}
	The notion of recurrence over continuous or dense time, as required for expressing
	{\em Analog and Mixed-Signal} (AMS) behaviours, is fundamentally different from what is
	offered by the recurrence operators of SystemVerilog Assertions (SVA). This article 
	introduces the formal semantics of recurrence over dense time and provides a 
	methodology for the runtime verification of such properties using interval arithmetic. 
	Our property language extends SVA with dense real-time intervals and predicates 
	containing real-valued signals. We provide a tool kit which interfaces with 
	off-the-shelf EDA tools through standard VPI. 
	\end{abstract}
	
	\begin{IEEEkeywords}
		Sequence Expressions, Assertions, Recurrence, Analog Mixed-Signal
	\end{IEEEkeywords}
	
	\IEEEpeerreviewmaketitle
	
	\section{Introduction}\label{sec:introduction}
	
	Assertion (property) language standards, such as SVA~\cite{sva} and PSL~\cite{psl}, are widely 
	used in digital design verification, and some of their recent features enable the 
	specification of properties using real variables, including analog signals sampled at 
	discrete clock boundaries. Properties in SVA are based on \textit{sequence expressions} 
	that capture sequences of Boolean events separated by clock-cycle delays. 
	
	The clocked semantics of SVA can lead to precision related problems when dealing with
	analog signals. For example, consider the property: \textit{"The output $V_{out}$ must 
	cross 1.8~V within $2 \mu s$ to $4.25 \mu s$ of the input $V_{in}$ crossing 3V".} If the
	clock, {\tt clk}, has a period of $0.4 \mu s$, we may write this assertion in SVA as 
	follows: 
	\begin{verbatim}
	wire x, y;
	assign x = V(Vin) > 3;
	assign y = V(Vout) > 1.8;
	property DelayCheck;
	  @(posedge clk) $rose(x) |-> ##[5:11] $rose(y);
	endproperty
	assert property (DelayCheck);
	\end{verbatim}
	Note that the real time interval, $[2 \mu s\ :\ 4.25 \mu s]$, is approximated by the
	discrete interval $[5:11]$ in terms of the number of clock cycles of the clock, 
	{\tt clk}, of period $0.4 \mu s$. The loss of precision due to this approximation
	may lead to missing a failure as shown in Fig.~\ref{fig:precisionErrors}. Here
	$V_{in}$ crossed 3V exactly $4.3 \mu s$ after $V_{out}$ crossed 1.8~V, which exceeded
	the real time interval, but the assertion checker detects that the crossing took place
	within the specified number of clock cycles. 
	
	In general over-approximation (equivalently, under-approximation) produces false 
	positives (equivalently, false negatives). Increasing the precision of the assertion 
	clock reduces the chance of a false positive/negative, but it increases the number of 
	cycles in the interval, and thereby the assertion checking overhead. For this reason,
	existing literature, including our own~\cite{Mukhopadhyay:2009, AMS_LTL, AMS_LTL-L, assertion-aware-sampling, debugging-window, localVariablesAntara, dyfet, forfet}, advocates the use of 
	dense time assertion checking, which works using real interval arithmetic as opposed
	to cycle based reasoning.
    
	\begin{figure}[h]
		\centering
		\includegraphics[width=\linewidth]{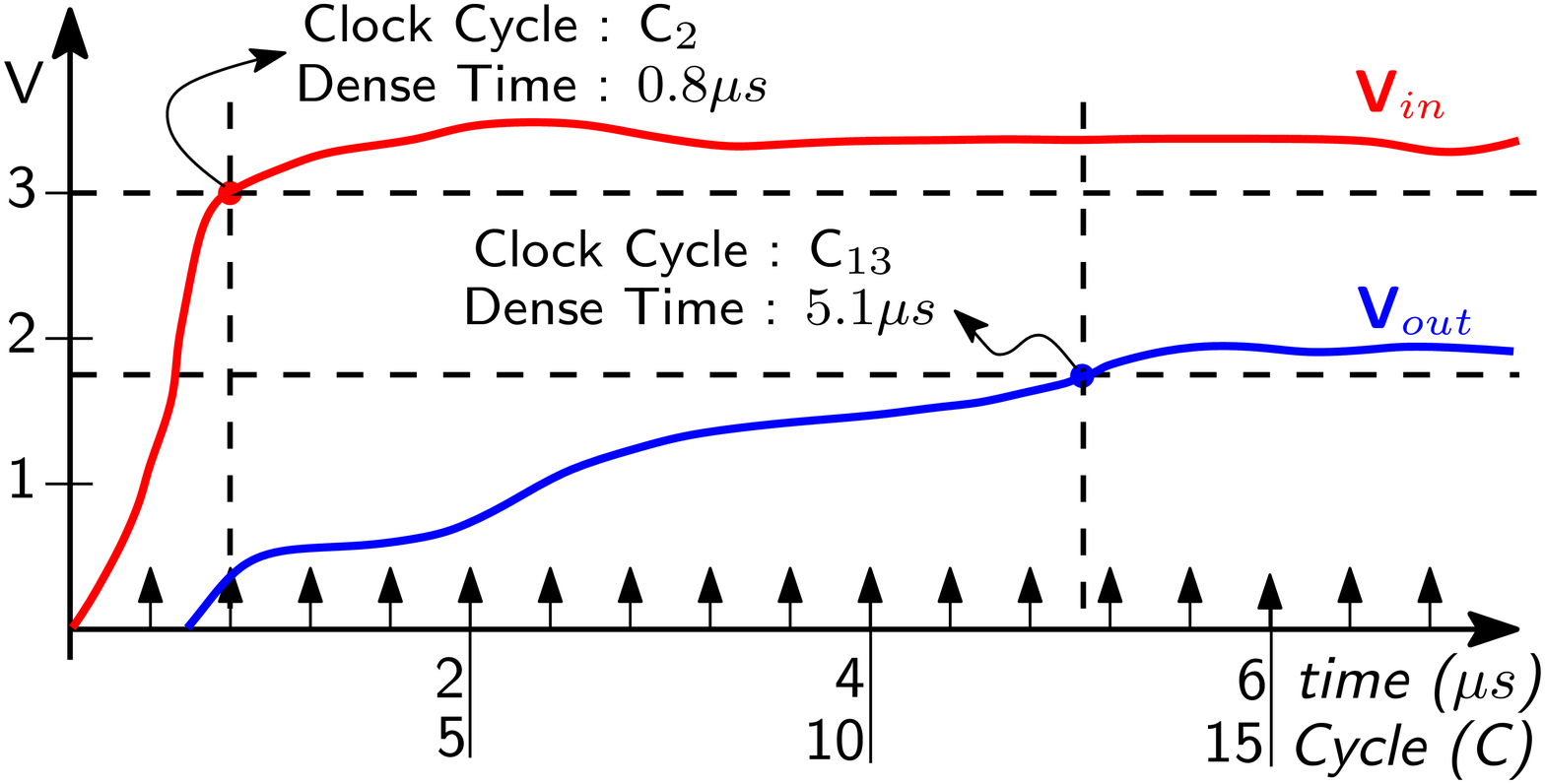}
		\caption{Bugs can escape due to low precision sampling}\label{fig:precisionErrors}
	\end{figure}
	
	The semantics of dense-time naturally allows properties to hold continuously over a
	dense time period. We define \textit{recurrence} to mean that the truth of an expression 
	holds true continuously over a period of time. For example, consider the
	requirement, {\em "If the enable becomes true, and thereafter within 5ms the output 
	voltage is above 3V for at least 2ms, then within the following 0.7ms the out\_good 
	signal stays true for at least 1ms."} An intrinsic feature of such requirements is 
	that predicates must be true continuously over a time period. This is fundamentally 
	different from the notion of recurrence in languages like SVA~\cite{sva} and 
	PSL~\cite{psl}, where recurrence means a countable non-overlapping series of
	matches of a sequence expression.
	
	In this article, we propose the dense time semantics of recurrence and our methodology
	for evaluating assertions having recurrence operators as well as other operators. Since
	recurrence operators are frequently used with other types of operators in an assertion,
	we provide the integrated set of interval arithmetic steps used in our tool, CHAMS.
	Results on CHAMS are also provided at the end.

\section{Recurrence in AMS Assertions}\label{sec:motivatingExample}

This section demonstrates the use of recurrence operators over dense-time. Consider the
waveforms for the voltages of analog nets, $a$, $b$, and $c$, shown in 
Fig.~\ref{fig:motivatingExamples}. Also shown are truth intervals of some of the
{\em predicates over real variables} (PORVs), and the match/fail intervals of the
assertions described below.

\begin{figure}[t]
	\centering
	\includegraphics[width=\linewidth]{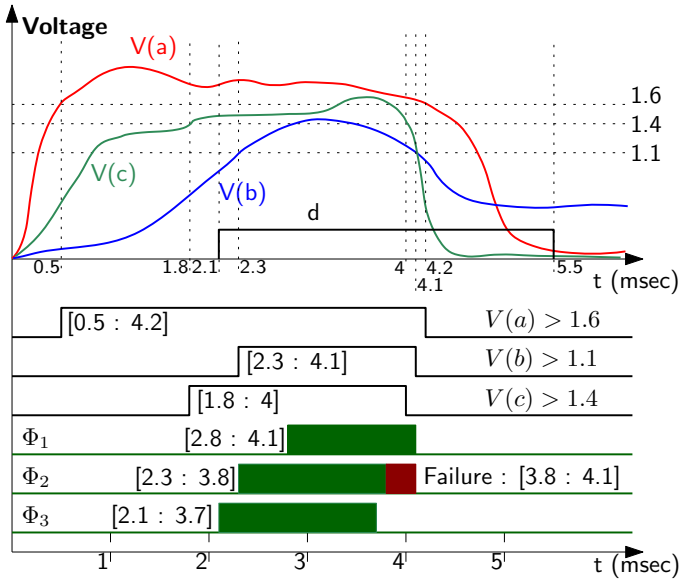}
	\caption{Waveforms of signals and assertion matches}\label{fig:motivatingExamples}
\end{figure}
	
\begin{example}
If $V(a)$ remains above 1.6~V for 2.3~ms, then $V(b)$ will be above 1.1~V.
\begin{lstlisting}[basicstyle=\ttfamily,columns=fullflexible,gobble=2,escapechar=`]
    `$\Phi_1:$ `{V(a)>1.6}[*0.0023] |-> {V(b)>1.1};
\end{lstlisting}
\end{example}

The syntax of the recurrence operator is similar to that of SVA, but the semantics of
recurrence of the PORV, {\tt V(a)>1.6}, is continuous over the specified dense time. The unit
of voltage is volts and the unit of time is seconds.

\begin{example}
Whenever V(b) is greater than 1.1~V, d will be high at some later time
    between 0.2~msec and 1.3~msec, and V(c) will remain above 1.4V until d 
    becomes true.
    \begin{lstlisting}[basicstyle=\ttfamily,columns=fullflexible,gobble=2,escapechar=`]
    `$\Phi_2:$` {V(b)>1.1}|-> 
      {V(c)>1.4}[*0.0002:0.0013]{d==1};
     \end{lstlisting}
    \end{example}
    
    \begin{example}
    If V(a) is higher than 1.6~V and within 1.3~msec to 2.9~msec V(c) is higher than
    1.4~V for 0.3~msec, then V(b) will be greater than $1.1$ at some time later between 
    0.4~ms and 1.2~ms and d will be high until {\tt V(b)>1.1}.
    \begin{lstlisting}[basicstyle=\ttfamily,columns=fullflexible,gobble=2,escapechar=`]
    `$\Phi_3:$ `{V(a)>1.6}##[0.0013:0.0029]{V(c)>1.4}[*0.0003] 
         |-> {d==1} [*0.0004:0.0012] {V(b)>1.1};
    \end{lstlisting}
    \end{example}

The non-vacuous matches of these assertions are shown in Fig.~\ref{fig:motivatingExamples}. 
The assertion $\varphi_2$ also has failures. It may be noted that AMS assertions can match
or fail continuously over a period of time.
	
\section{Formal Semantics}\label{sec:AMS-SVA}

As in SVA, our language, Analog Mixed-Signal Assertion Language (AMSAL), uses the notion of {\em sequence expressions}. A property in 
AMSAL takes one of the following forms:\\
    {\tt SEQ |-> SEQ}\\
    {\tt SEQ |-> \#\#[a:b] SEQ }\\
where:\\
\setlength{\tabcolsep}{2pt}
\begin{tabular}{lcl}
		{\tt SEQ} & {\tt $\rightarrow$} & {\tt BEXPR} \\
		& & {\tt | $ \thicksim  $ \{BEXPR\}}\\
		& & {\tt | \{SEQ\}[$ ^*a $]} \\
		& & {\tt | SEQ \#\#[$a:b$] SEQ} \\
		& & {\tt | \{SEQ\} [$ ^*a:b $] SEQ} \\	
	\end{tabular}	

\noindent In the syntax, {\tt BEXPR} is a Boolean expression over events and PORVs, and $ a,b \in \mathbb{R}^+, ~b \geq a$. We use symbol $\phi$ for Boolean expressions, $\varphi$ for sequence expressions, and $\Phi$ for properties.

	\begin{definition}
		\textbf{Predicate over Real Variables (PORVs)} \textit{If X=$\{x_1,\ x_2,\ ...,\ x_n\}$ 
			denotes the set of continuous variables, then a Predicate over Real Variables, $P$, 
			may be defined as, $P ::= f(x_1,\ x_2,\ ...,\ x_n)\ \sim 0$, where f is a mapping, 
			$f: \mathbb{R}^n \rightarrow \mathbb{R}$, and $\sim$ is a relational operator such that 
			$\sim\ \in \{>,\geq\}$.}
		~\hfill
		$\QEDA$
	\end{definition}
	
	Other relational operators may be derived using $\sim$ along with appropriate propositional connectives.
	
	\begin{definition}\label{def:event}
		\textbf{Events: }Events are of the form $@^*(P)$, where $P$ is a PORV and $* \in \{+,-,~\}$. $@^+(P)$, 
		$@^-(P)$, $@(P)$ are true respectively at the positive edge, negative edge, both positive
		and negative edge of the truth of $P$.~\hfill$\QEDA$
	\end{definition}
	
	Boolean expressions in AMSAL are written over Boolean propositions, PORVs, and events. The syntax for {\tt BEXPR} is as follows:
	
	{\small \setlength\tabcolsep{3pt}
		\begin{tabular}{rcl}
			{\tt {BEXPR}} & {\tt $\rightarrow$} & {\tt{EVENT}} \\
			& & {\tt | {EVENT, ASGMT}}\\
			& & {\tt | (CONJUNCT)} \\
			& & {\tt | {(CONJUNCT) , ASGMT}} \\
			& & {\tt | {EVENT \&\& (CONJUNCT)}} \\
			& & {\tt | {BEXPR OR BEXPR}}\\
			
			{\tt {CONJUCNT}} & {\tt $\rightarrow$} & {\tt {CONJUNCT  \&\& PORV}}\\
			& & {\tt | {PORV}}\\
			
		\end{tabular}
	}

	\begin{definition}\label{def:satisfaction}
		\textbf{Satisfaction Relation: $\tau(t) \models \varphi$} is defined as follows for Boolean expressions:
		\begin{itemize}
			\item $\tau(t) \models P$ iff $P$ is a PORV, and $P$ is true for signal valuations in state $\tau(t)$ or $P$ is a Boolean signal and $\tau_P(t) = true$.
			\item $\tau(t) \models D$ where $D=P_1 \vee P_2 \vee ... \vee P_N ~$ iff $~\exists_{1\leq i \leq N} ~ \tau(t)  \models P_i$, and $P_i$ is a PORV.
			\item $\tau(t) \models C$ where $C=P_1 \wedge P_2 \wedge ... \wedge P_N ~$ iff $~\forall_{1\leq i \leq N} ~ \tau(t)  \models P_i$, and $P_i$ is a PORV.
			
			\item $ \tau(t) \models E $ where $ E $ is an event on PORV $P$, iff either of the following holds:
			\begin{itemize}
				\item $ E \equiv @^+(P) : \tau(t) \models P ~\bigwedge~ \exists_{t'<t} ~\forall_{\hat{t} \in [t',t)}~ \tau(\hat{t}) \not\models P$
				
				\item $ E \equiv @^-(P) : \tau(t) \not\models P ~\bigwedge~ \exists_{t'<t} ~\forall_{\hat{t} \in [t',t)}~ \tau(\hat{t}) \models P$
				
				\item $ E \equiv @(P) : \tau(t) \models @^+(P) \bigvee \tau(t) \models @^-(P)$
			\end{itemize}
			
		\end{itemize}
		
		~\hfill$\QEDA$
	\end{definition}

We now describe the simulation semantics of recurrence in dense-time as used in AMSAL with 
an interpretation over a simulation trace $\tau$. 
	
	\begin{definition}
		\textbf{Simulation Trace}
		A simulation trace $\tau$ is a mapping 
		$\tau : \mathbb{R}_{\geq 0} \rightarrow \mathbb{R}^{|V|}$, 
		where $V = \{v_1,~v_2,~...,~v_n\}$ is the set of variables 
		(Boolean and Real) representing signals of the system. The state of the system in $\tau$ at time $t$ is given as $\tau(t)$. For $x \in V$, its value at time $t$, in $\tau$, is $\tau_{x}(t)$.	~\hfill$\QEDA$	
	\end{definition}
	
	In Definition~\ref{def:endMatchSat}, state satisfaction for a Boolean expression $\phi$, that is $\tau(t) \models \phi$, is extended to sequence expressions with support for recurrence with dense-time. Note that, a Boolean expression is also a sequence expression.
	
	\begin{definition}
		\label{def:endMatchSat}
		\textbf{Extended Satisfaction Relations:}
		$\tau(t) \models_e \varphi$ is recursively defined for a sequence expression $\varphi$, to be true iff:
		\begin{itemize}
			\item $\{\varphi_1\}${\tt[$^*$a]}, {\tt a$>0:$}
			$(\forall_{t' \in [0:a]} ~ \tau(t-t') ~{\models}_e ~ \varphi_1 ) $

			\item $\varphi_1${\tt \#\#[a:b]}$\varphi_2:$
			$\tau(t)\models_e\varphi_2 \wedge t \in [t'+a:t'+b] \wedge \tau(t') {\models}_e \varphi_1$

			\item  $ \{\varphi_1\}${\tt[$^*$a:b]}$\varphi_2:$
			$\tau(t) \models_e \varphi_2 \wedge \forall_{t' \in [0:a]}  \tau(t-t') {\models}_e  \varphi_1 $

		\end{itemize}
		where $\varphi_1$ and $\varphi_2$ are sequence expressions.
		$~\hfill\QEDA$
	\end{definition}

	The satisfaction of a temporal expression has a begin time and an end time for a match of the expression, known respectively as the \textit{begin match} and \textit{end match} for a sequence expression. The relation $\models_e$ describes the end-match for a sequence expression. Due to the nature of the truth of 	a temporal property over analog artifacts, an end-match at time $t$ for sequence expression $\varphi$ may be associated with multiple begin-match time points, and vice versa. 
	
	\begin{definition}\label{def:beginMatchSat}
		\textbf{Begin Match:} We define $\mathbb{B}(\tau,\varphi,t)$ as the set of time points $\hat{t}$ such that there exists a match of $\varphi$ in $\tau$ starting at $\hat{t}$ and ending at $t$. Formally, for $\tau(t) \models_e \varphi$, $\hat{t} \in \mathbb{B}(\tau,\varphi,t)$ is said to be a begin match for $\varphi$'s end-match at time $t$ iff:
		
		\begin{itemize}
			\item $ \{\varphi_1\}${\tt[$^*$a]}$:$ {\tt a$>0$}, $ t' = t-a, ~\hat{t} \in \mathbb{B}(\tau,\varphi_1,t')$

			\item $\varphi_1${\tt \#\#[a:b]}$\varphi_2$ or $ \{\varphi_1\}${\tt[$^*$a:b]}$\varphi_2:$ $(t'\in\mathbb{B}(\tau,\varphi_2,t)) \wedge({t'' \in [t'-b:t'-a]}) \wedge (\tau(t'') \models_{e} \varphi_1) \wedge ( \hat{t} \in \mathbb{B}(\tau,\varphi_1,t'') ) $

		\end{itemize}
		where $\varphi_1$ and $\varphi_2$ are sequence expressions.
			~\hfill$\QEDA$
	\end{definition}

	\begin{definition}
	\label{def:assertionMatchSemantics}
		\textbf{Match of an assertion:} We say that an assertion has matched at time $t$ iff $t$ is an end match of the antecedent and is a begin match of the consequent. Hence, the assertion $\varphi_1${\tt |->}$\varphi_2$ matches trace $\tau$ 
		non-vacuously at time $t$, iff $\exists_{t' \geq t}~ (\tau(t) \models_e \varphi_1 ~\bigwedge ~ t' \in \mathbb{B}(\tau,\varphi_2,t))$. 
		
		Note that an assertion vacuously match at time $t$, when $\tau(t)\not\models_e \varphi_1$. An assertion fails at time $t$ iff it has no vacuous nor non-vacuous match at time $t$.~\hfill
		$\QEDA$
	\end{definition}
	
	In the following section, we propose an interval abstraction that enables computing the match of assertions over dense time.
	
	\section{Interval Arithmetic for AMSAL}\label{subsec:intervalArith}
	In this section we discuss how the match of a property interpreted over dense-time may be computed, by recursively interpreting the truth of expressions in the property as operations over time intervals.
	
	\begin{definition}
		\textbf{Time Interval:}
		A time interval $ I $ is a nonempty convex subset of $ \mathbb{R}_{\geq 0} $ expressed as $ [a:b] $, $ (a:c) $, $ [a:c) $, and $ (a:c] $ where $ a,b,c \in \mathbb{R}_{\geq 0} $ and $ b \geq a $, $ c>a $. $ l(I) $ and $ r(I) $ are used to denote the left ends and right ends respectively of interval $ I $. ~\hfill$\QEDA$
	\end{definition}
	For an interval $I$, the Minkowski operators are as follows. The Minkowski sum, $ I \oplus [c:d] $, where $c,d \in \mathbb{R}_{\geq 0}$, $ c\leq d$, is computed as, $ I \oplus [c:d] = [l(I)+c:r(I)+d] $. Similarly, the Minkowski difference, $ I \ominus [c:d] $,  where $c,d \in \mathbb{R}_{\geq 0}$, $ c\leq d$, is computed as, $ I \ominus [c:d] = [l(I)-d:r(I)-c]$. Note that any interval $[a:b]$, where $a>b$ is a null interval.
	\begin{definition}
		\textbf{Truth Interval:}
		Time interval $I$ is a truth interval of $\varphi$, where $\varphi $ is a PORV, event, Boolean signal,
		or Boolean expression iff $ \forall_{t \in I} ~ \tau(t) \models \varphi$.
		For each PORV, event, Boolean signal, or Boolean expression, $\varphi$, $\mathcal{I}_\varphi(\tau)$ 
		is the set of all truth intervals of $\varphi$ in $\tau$.
		~\hfill$\QEDA$
	\end{definition}
	
	For trace $\tau$, $I_\tau=[L:R]$ is the time interval over which the trace is defined. In this context, the complement of a truth interval $I$, the \textit{false interval}, is denoted $\bar{I}$ = $\{t'| t'\in I_\tau, t'\not\in I\}$.
	
	In general, a sequence expression may be expressed as:
	{\tt $\phi_1$ $\theta_1$ $\phi_2$ $\theta_2$ $\cdots$ $\theta_{n-1}$ $\phi_n$}, where
	$\forall_{1 \leq i \leq n} ~ \phi_i$ is a Boolean expression of propositions, PORVs and events, and 
	$\forall_{1 \leq j < n} ~ \theta_j$  is a list of sequence operators of the form {\tt \#\#[a:b], [*a:b]} and 
	{\tt [*a]}. If $\theta_j$ contains a sequence of operators, they are always applied to $\phi_j$ from left to 
	right. 
	The following definitions describe the interval arithmetic 
	interpretation for sequence expressions.
	\begin{definition}
		\textbf{Interval Set $\mathbf{I}(\tau)$:}
		For trace $\tau$, $\mathbf{I}(\tau)$ is choice of truth intervals, $\langle  I_{P_1},I_{P_2},...,I_
		{P_k}  \rangle$, $I_{P_i} \in \mathcal{I}_{P_i}(\tau), \forall P_i \in \mathbb{P}_\varphi$, where 
		$\mathbb{P}_\varphi$ is the set of all Boolean propositions, PORVs and events defined in the 
		assertion $\varphi$. For ease of use $\mathbf{I}_P(\tau) = I_P \in \mathbf{I}(\tau)$.
		~\hfill$\QEDA$
	\end{definition}
	
	The truth interval for a sequence expression is viewed as having two contexts, a \textit{begin match} and 
	an \textit{end match} context, as defined by Definitions~\ref{def:endMatchSat} and \ref{def:beginMatchSat}. For non-temporal artifacts such as PORVs, events and Boolean expressions, the two 
	contexts evaluate to the same set of truth intervals. The computation of the begin and end match truth 
	intervals for a sequence expression is computed recursively. As defined earlier, sequence operators are 
	left associative. Brackets may also be used to describe sequences that break away from the default 
	semantics. In general the computation is defined below.
	
	\begin{definition}
		\label{def:beginEndMatch}
		\textbf{Begin and End Match Truth Intervals:}
		Given a choice $\mathbf{I}(\tau)$ of truth intervals, the end match interval, $\mathcal{M}_E(\varphi, \mathbf{I}(\tau))$, and begin match interval, $\mathcal{M}_B(\varphi,\mathbf{I}(\tau))$, for sequence expression $\varphi$ is defined as follows:
		
		{\setlength\tabcolsep{2pt} \small
			\renewcommand{\arraystretch}{1.4}
			$\mathcal{M}_E(\varphi, \mathbf{I}(\tau))$
			
			\begin{tabular}{cp{9cm}}
				
				= & $\mathcal{M}_E (\hat{\varphi}, \mathbf{I}(\tau)) \cap \mathbf{I}_P(\tau)$  $,\varphi \equiv \hat{\varphi} \bigwedge P $ \\
				
				= & $ \mathcal{M}_E( \hat{\varphi}, \mathbf{I}(\tau) ) \cup \mathbf{I}_P(\tau) $
				$, \varphi \equiv \hat{\varphi} \bigvee P $\\
				
				= & $[~l(\mathcal{M}_E( {\varphi_1}, \mathbf{I}(\tau) ) ) + a:r(\mathcal{M}_E( {\varphi_1}, \mathbf{I}(\tau) ) )~]$
				$, \varphi \equiv \{\varphi_1\} ${\tt[$^*a$]}\\
				
				= & $( \mathcal{M}_E(\varphi_1, \mathbf{I}(\tau)) \oplus [a:b] ) \cap \mathcal{M}_E(\varphi_2, \mathbf{I}(\tau))$
				$, \varphi \equiv \varphi_1$~{\tt\#\#[a:b]}$\varphi_2$\\
				
				= & $\mathcal{M}_E(\{\varphi_1\} {\tt [*a]}, \mathbf{I}(\tau) )~ \cap ~\mathcal{M}_E (\varphi_2, \mathbf{I}(\tau))$
				$, \varphi \equiv \{\varphi_1\}$~{\tt[$^*a$:b]}$\varphi_2$\\
				\\
			\end{tabular}
			
			$\mathcal{M}_B(\varphi, \mathbf{I}(\tau))$
			
			\hspace{-7pt}\begin{tabular}{cp{8.4cm}l}
				= & $\mathcal{M}_B (\hat{\varphi}, \mathbf{I}(\tau)) \cap \mathbf{I}_P(\tau)$  $,\varphi \equiv \hat{\varphi} \bigwedge P $ \\
				
				= & $ \mathcal{M}_B( \hat{\varphi}, \mathbf{I}(\tau) ) \cup \mathbf{I}_P(\tau)$  $, \varphi \equiv \hat{\varphi} \bigvee P $\\
				
				= & $\mathcal{M}_B(\varphi_1,[l(\mathcal{M}_E( {\varphi}, \mathbf{I}(\tau) ) ):r(\mathcal{M}_E( {\varphi}, \mathbf{I}(\tau) ) )]-a)$ 
				$, \varphi \equiv {\varphi_1} ${\tt[$^*$a]}\\
				
				= & $\mathcal{M}_B(\varphi_1 , ( \mathcal{M}_E(\varphi, \mathbf{I}(\tau)) \ominus [a:b] ) \cap  \mathcal{M}_E(\varphi_1, \mathbf{I}(\tau))),$\\
				
				& \hfill $\varphi \equiv \varphi_1${\tt\#\#[a:b]}$\varphi_2$ or $\varphi \equiv \{\varphi_1\}${\tt[$^*$a:b]}$\varphi_2$ \\
				
		\end{tabular}
			\renewcommand{\arraystretch}{1}
		}
		~\hfill $\QEDA$
	\end{definition}

    We recursively prove that the arithmetic defined in Definition~\ref{def:beginEndMatch} correctly computes the time points defined by assertion match semantics in Definitions~\ref{def:endMatchSat} and ~\ref{def:beginMatchSat}.
    
    The fundamental case, when $\varphi$ is Boolean ({\tt BEXPR}), is straightforward. The intervals of truth for the $\wedge$ and $\vee$ operations are respectively computed using the $\cap$ and $\cup$ set operations over intervals \cite{temporal properties}. In these cases, the begin and end-matches are identical. 
    For each of the four semantic rules, let $\tau(t)\models_e \varphi$, that is $t$ is an end-match time point. We prove the arithmetic assuming the interval set $\mathbf{I}(\tau)$, a labelled set of truth intervals, one for each $P\in\mathbb{P}_\varphi$.  Let $\mathcal{M}_E(\varphi,\mathbf{I}(\tau))=[l:r]$ be a non-empty end-match interval. We use $\mathcal{D}(.):\mathbb{R^+} \rightarrow I$, to generalize a time point to a time interval, in the context of a quantifier.
    \begin{theorem}
    The end-match intervals computed in Definition~\ref{def:beginEndMatch} correctly compute matches according to the semantics defined in Definition~\ref{def:endMatchSat}.
    \end{theorem}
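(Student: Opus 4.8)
The plan is to prove the theorem by structural induction on the sequence expression $\varphi$, establishing in each case the equivalence
\[\tau(t)\models_e\varphi \iff \exists\,\mathbf{I}(\tau)\ :\ t\in\mathcal{M}_E(\varphi,\mathbf{I}(\tau)),\]
where the existential ranges over all choices of truth intervals for the members of $\mathbb{P}_\varphi$. Read left to right this gives completeness (every semantic end-match is produced by some choice), and right to left it gives soundness (every point in a computed interval genuinely satisfies $\models_e$ under that choice). The base case is $\varphi\equiv${\tt BEXPR}: here begin- and end-matches coincide, the atomic truth intervals are supplied directly by $\mathbf{I}(\tau)$, and the identities $\mathcal{M}_E(\hat\varphi\wedge P)=\mathcal{M}_E(\hat\varphi)\cap\mathbf{I}_P(\tau)$ and $\mathcal{M}_E(\hat\varphi\vee P)=\mathcal{M}_E(\hat\varphi)\cup\mathbf{I}_P(\tau)$ reduce the $\wedge$ and $\vee$ clauses of $\models$ to the corresponding set operations, which is immediate.

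For the inductive step I would treat the temporal constructors in turn, assuming the equivalence for the subexpressions $\varphi_1,\varphi_2$. The delay case $\varphi\equiv\varphi_1${\tt\#\#[a:b]}$\varphi_2$ is cleanest because its semantic clause is purely existential in the intermediate point $t'$: fixing a choice under which $\mathcal{M}_E(\varphi_1)=[l_1:r_1]$, the Minkowski sum satisfies $[l_1:r_1]\oplus[a:b]=\{t:\exists t'\in[l_1:r_1],\ t\in[t'+a:t'+b]\}$ — the length-$(b-a)$ window sliding over $[l_1:r_1]$ leaves no gap since consecutive windows overlap — and intersecting with $\mathcal{M}_E(\varphi_2)$ pins $t$ to be simultaneously a reachable delay target and an end-match of $\varphi_2$. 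Soundness reads a witness $t'$ off the membership in the sum; completeness lifts the semantic witness $t'$ and endpoint $t$ by selecting, for each atomic proposition, the truth interval containing the relevant instant, reading the match set as the union over all such choices so as to absorb the case where a shared proposition is witnessed by different truth intervals at $t'$ and at $t$. The recurrence-with-delay case $\{\varphi_1\}${\tt[$^*$a:b]}$\varphi_2$ decomposes identically, reusing the recurrence interval $\mathcal{M}_E(\{\varphi_1\}${\tt[$^*$a]}$)$ in place of $\mathcal{M}_E(\varphi_1)\oplus[a:b]$ and intersecting with $\mathcal{M}_E(\varphi_2)$.

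The recurrence constructor $\{\varphi_1\}${\tt[$^*$a]} is where the real work lies, because its semantics demands that $\varphi_1$ hold \emph{continuously}, $\forall t'\in[0:a]\ \tau(t-t')\models_e\varphi_1$, whereas the arithmetic returns the single interval $[\,l(\mathcal{M}_E(\varphi_1))+a:r(\mathcal{M}_E(\varphi_1))\,]$. Soundness is direct: $t\in[l+a:r]$ forces $t-a\ge l$ and $t\le r$, so $[t-a:t]\subseteq[l:r]=\mathcal{M}_E(\varphi_1,\mathbf{I}(\tau))$, and the induction hypothesis makes every point of that sub-interval an end-match of $\varphi_1$. \textbf{The main obstacle is the completeness direction here}: from pointwise satisfaction on $[t-a:t]$ I must recover a \emph{single} choice $\mathbf{I}(\tau)$ whose end-match interval contains all of $[t-a:t]$, since a priori distinct instants $t-t'$ could be covered by incompatible choices. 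This needs a continuity/convexity argument: because each $\mathcal{M}_E(\varphi_1,\cdot)$ and each atomic truth interval is convex and $[t-a:t]$ is connected, the witnessing selection can be held fixed across the stretch, yielding one convex $[l:r]\supseteq[t-a:t]$ with $t\in[l+a:r]$. Making this gluing rigorous — ruling out that a maximal continuous stretch of $\varphi_1$-truth is fragmented across incompatible interval choices — is the crux, and I would discharge it by strengthening the induction hypothesis to record that every constructor maps convex inputs to convex outputs, so the invariant ``a continuous stretch of truth lies within one truth interval'' is preserved up the recursion.
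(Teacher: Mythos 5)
Your plan lands on the same case analysis and the same interval computations as the paper's proof: for $\{\varphi_1\}${\tt[$^*$a]} the window $[l_1+a:r_1]$ (non-empty iff $r_1-l_1\geq a$), for $\varphi_1${\tt\#\#[a:b]}$\varphi_2$ the Minkowski sum $[l_1+a:r_1+b]$ intersected with $\mathcal{M}_E(\varphi_2,\mathbf{I}(\tau))$, and for $\{\varphi_1\}${\tt[$^*$a:b]}$\varphi_2$ the reuse of the recurrence window intersected with $\mathcal{M}_E(\varphi_2,\mathbf{I}(\tau))$. The difference is one of framing and depth. The paper fixes a single choice $\mathbf{I}(\tau)$ and runs a one-directional domain-propagation computation (introducing $\mathcal{D}(t)$, $\mathcal{D}(t')$ for the domains of the quantified time points and deriving $l=l_1+a$, $r=r_1$, and so on); it never separates soundness from completeness, and in particular never confronts the question you correctly isolate as the crux: whether continuous satisfaction of $\varphi_1$ on $[t-a:t]$ can be witnessed by a \emph{single} choice of truth intervals. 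Your existentially quantified equivalence and the gluing argument for that direction are therefore a genuine strengthening of what the paper proves, not a restatement of it. One caveat on your proposed discharge: the invariant that every constructor maps convex inputs to convex outputs fails for $\hat{\varphi}\bigvee P$ when the two operand intervals are disjoint (the paper has the same latent problem, since it applies $l(\cdot)$ and $r(\cdot)$ to such unions); you would either need to restrict the invariant to the three temporal constructors, where intersection and Minkowski sum do preserve convexity, or work with finite unions of intervals and argue per convex component. For the atomic layer the gluing is immediate, because $\mathcal{I}_P(\tau)$ contains all truth intervals of $P$, including maximal ones, so the witnessing choice can always be taken maximal.
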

    \begin{proof}
    Let $\mathcal{M}_E(\varphi_1,\mathbf{I}(\tau))=[l_1:r_1]$ and $\mathcal{M}_E(\varphi_2,\mathbf{I}(\tau))=[l_2:r_2]$ be end-match truth intervals for $\varphi_1$ and $\varphi_2$.
    \begin{itemize}
        \item $\varphi \equiv \{\varphi_1\}${\tt [*a]}: 
        
        $\mathcal{D}(t')=[0:a]$ and $\mathcal{D}(t)=[l,r]$. 
        Since $\tau(t-t')\models_e \varphi_1$, $\mathcal{D}(t-t')=[l_1,r_1]$. Hence, for a non-empty end-match interval for $\varphi$, $r_1-l_1\geq a$. Also, the earliest time point in $\mathcal{M}_E(\varphi,\mathbf{I}(\tau))$ is $l = min_{(t-t') \in \mathcal{D}(\varphi_1)}(t-t') + max_{t'\in[0:a]}(t') = l_1+a$. While the latest time point is
        $r = max_{(t-t') \in \mathcal{D}(\varphi_1)}(t-t') = r_1$.
        
        \item $\varphi \equiv \varphi_1$ {\tt \#\#[a:b] }$\varphi_2$: 
        Using the domains of $t$ and $t'$, we have, $\mathcal{D}(t) = $ $[l_2:r_2] \cap (\mathcal{D}(t')\oplus[a:b])$
        $= [l_2:r_2] \cap [l_1 + a:r_1 + b]$
        
        \item $\varphi \equiv \{\varphi_1\}${\tt[$^*$a:b]}$\varphi_2 $:  
        As shown earlier, for a non-empty end-match, the right-hand side of the conjunction evaluated to the interval $[l_1+a:r_1]$. Hence $\mathcal{D}(t) = [l_2,r_2] \cap [l_1+a:r_1]\hfill\qedhere$.

    \end{itemize}
    \end{proof}

    \begin{theorem}
    The begin-match intervals computed in Definition~\ref{def:beginEndMatch} correctly compute matches according to the semantics defined in Definition~\ref{def:beginMatchSat}.
    \end{theorem}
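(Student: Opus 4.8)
The plan is to prove the statement by structural induction on the sequence expression $\varphi$, running the recursion of Definition~\ref{def:beginEndMatch} in lock-step with the recursive definition of $\mathbb{B}(\tau,\varphi,t)$ in Definition~\ref{def:beginMatchSat}, and reusing the end-match intervals and their closed forms already established in the preceding end-match theorem. The invariant I would maintain is that, for a query window $J$, the rule for $\mathcal{M}_B(\varphi,J)$ returns exactly $\bigcup_{t\in J\cap\mathcal{M}_E(\varphi,\mathbf{I}(\tau))}\mathbb{B}(\tau,\varphi,t)$, i.e. the begin-match interval collects every begin-match point associated with every admissible end-match point. The base case is $\varphi$ a Boolean expression: here $\mathbb{B}(\tau,\varphi,t)=\{t\}$ and, as already observed, the begin- and end-match contexts coincide, so the $\cap$/$\cup$ rules for $\hat\varphi\wedge P$ and $\hat\varphi\vee P$ inherit correctness verbatim from the corresponding end-match rules.

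For the inductive step I would mirror the three temporal cases, writing $\mathcal{M}_E(\varphi_1,\mathbf{I}(\tau))=[l_1:r_1]$ and $\mathcal{M}_E(\varphi_2,\mathbf{I}(\tau))=[l_2:r_2]$ and lifting the pointwise quantifiers to interval operations through the generalization $\mathcal{D}(\cdot)$, exactly as in the end-match proof. For $\varphi\equiv\{\varphi_1\}${\tt[$^*a$]}, Definition~\ref{def:beginMatchSat} gives $\mathbb{B}(\tau,\varphi,t)=\mathbb{B}(\tau,\varphi_1,t-a)$, so as $t$ ranges over $\mathcal{M}_E(\varphi,\mathbf{I}(\tau))=[l_1+a:r_1]$ the relevant $\varphi_1$ end-matches $t'=t-a$ range over $\mathcal{M}_E(\varphi,\mathbf{I}(\tau))-a$. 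Feeding this shifted interval into $\mathcal{M}_B(\varphi_1,\cdot)$ and invoking the induction hypothesis then yields precisely the required union, which is the third $\mathcal{M}_B$ rule.

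For the two delay rules, $\varphi\equiv\varphi_1${\tt \#\#[a:b]}$\varphi_2$ and $\varphi\equiv\{\varphi_1\}${\tt[$^*a$:b]}$\varphi_2$, the argument reduces to showing that the second argument of $\mathcal{M}_B(\varphi_1,\cdot)$, namely $(\mathcal{M}_E(\varphi,\mathbf{I}(\tau))\ominus[a:b])\cap\mathcal{M}_E(\varphi_1,\mathbf{I}(\tau))$, is exactly the set of $\varphi_1$ end-match points $t''$ admitted by Definition~\ref{def:beginMatchSat}. Here $\mathcal{M}_E(\varphi,\mathbf{I}(\tau))\ominus[a:b]=[l(\mathcal{M}_E(\varphi))-b:r(\mathcal{M}_E(\varphi))-a]$ collects every point lying a delay in $[a:b]$ before an end-match of $\varphi$, while intersecting with $\mathcal{M}_E(\varphi_1,\mathbf{I}(\tau))$ enforces the conjunct $\tau(t'')\models_e\varphi_1$ by discarding points that are not genuine $\varphi_1$ end-matches. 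Applying $\mathcal{M}_B(\varphi_1,\cdot)$ over this interval and using the induction hypothesis reproduces $\bigcup_t\mathbb{B}(\tau,\varphi,t)$; the bounded-recurrence rule is then identical once the factor $\{\varphi_1\}${\tt[$^*a$]} has been collapsed to $[l_1+a:r_1]$, exactly as in the end-match proof.

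The main obstacle I anticipate is the quantifier bookkeeping in this last case: Definition~\ref{def:beginMatchSat} routes the delay through the begin-match point $t'$ of $\varphi_2$ (with $t''\in[t'-b:t'-a]$), whereas the arithmetic routes it through the end-match interval $\mathcal{M}_E(\varphi)$ of the whole expression. I would have to establish, using the closed form $\mathcal{M}_E(\varphi,\mathbf{I}(\tau))=[l_2:r_2]\cap[l_1+a:r_1+b]$ from the preceding theorem together with $\mathcal{D}(\cdot)$, that ranging $t$ over $\mathcal{M}_E(\varphi)$ and shifting by $\ominus[a:b]$ yields the same collection of $\varphi_1$ end-matches $t''$ as ranging over the pointwise begin matches $t'\in\mathbb{B}(\tau,\varphi_2,t)$ and shifting each by $[t'-b:t'-a]$, and then verify that the intersection with $\mathcal{M}_E(\varphi_1,\mathbf{I}(\tau))$ neither drops a valid begin match nor admits a spurious one. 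Making this reconciliation of reference points precise is the crux; the remaining endpoint manipulations are routine interval computations of the kind already carried out for the end-match intervals.
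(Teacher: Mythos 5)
Your overall strategy is the same as the paper's: a case analysis on the structure of $\varphi$ that lifts the pointwise quantifiers of Definition~\ref{def:beginMatchSat} to interval operations via $\mathcal{D}(\cdot)$, reuses the end-match intervals $[l_1:r_1]$, $[l_2:r_2]$ and the closed form $\mathcal{M}_E(\varphi,\mathbf{I}(\tau))=[l:r]$ from the preceding theorem, and reads $\mathcal{M}_B(\varphi_1,J)$ as the union of $\mathbb{B}(\tau,\varphi_1,\cdot)$ over the end-match points collected in $J$. The {\tt[$^*a$]} case and the identification of $\mathcal{D}(t'')$ with $(\mathcal{M}_E(\varphi,\mathbf{I}(\tau))\ominus[a:b])\cap\mathcal{M}_E(\varphi_1,\mathbf{I}(\tau))$ are exactly the two steps the paper's proof records, so in that respect your plan is at least as detailed as the published argument.

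However, the ``crux'' you flag in your last paragraph is a genuine obstacle, and the reconciliation you propose there cannot be carried out as stated. Definition~\ref{def:beginMatchSat} anchors the delay window at the \emph{begin}-match $t'$ of $\varphi_2$ (it requires $t''\in[t'-b:t'-a]$ with $t'\in\mathbb{B}(\tau,\varphi_2,t)$), whereas the arithmetic $\mathcal{M}_E(\varphi,\mathbf{I}(\tau))\ominus[a:b]$ anchors it at the \emph{end}-match $t$ of $\varphi_2$. When $\varphi_2$ is itself temporal, $t'$ can be strictly earlier than $t$, so $[t'-b:t'-a]\neq[t-b:t-a]$ and the two constructions admit genuinely different sets of $\varphi_1$ end-matches $t''$; the equality you hope to ``establish'' is simply false in that generality. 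The end-match semantics of Definition~\ref{def:endMatchSat} (which places $t\in[t'+a:t'+b]$ with $t'$ an end-match of $\varphi_1$) agrees with the arithmetic, not with the literal text of Definition~\ref{def:beginMatchSat}; the paper's own proof silently adopts the end-match anchoring by writing $\mathcal{D}(t'')=([l:r]\ominus[a:b])\cap\mathcal{M}_E(\varphi_1,\mathcal{D}(t))$ and never mentions $t'$. So to complete your proof you must either restrict to the case where $\varphi_2$'s begin and end matches coincide (Boolean $\varphi_2$), or treat the $t'$ in Definition~\ref{def:beginMatchSat}'s delay window as if it read $t$; you cannot prove the two anchorings equivalent. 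Identifying this mismatch is to your credit, but the proof as planned does not close, and no amount of endpoint manipulation will make it close without amending one of the two definitions.
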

    \begin{proof}
    Let $\hat{t} \in \mathbb{B}(\tau,\varphi,t)$. We compute $\mathcal{D}(\hat{t})$ as follows:
    \begin{itemize}
        \item $\varphi \equiv \{\varphi_1\}${\tt [*a]}: 
        $\mathcal{D}(t') = \mathcal{D}(t) - a = [l-a : r -a]$. So $\mathcal{D}(\hat{t}) = \mathcal{M}_B(\varphi_1, [l-a : r-a])$
        
        \item $\varphi \equiv \varphi_1$ {\tt \#\#[a:b] }$\varphi_2$ or $\{\varphi_1\}${\tt[$^*$a:b]}$\varphi_2 $: 
        $\mathcal{D}(t) = [l:r] $,
        $\mathcal{D}(t'') = ([l:r] \ominus[a :b]) \cap \mathcal{M}_E(\varphi_1, \mathcal{D}(t)) $,
        hence $\mathcal{D}(\hat{t}) = \mathcal{M}_B(\varphi_1, \mathcal{D}(t'')) $\hfill\qedhere
    \end{itemize}
    \end{proof}
		
		\begin{definition}
		\label{def:assertionMatchInterval}
		\textbf{Match Truth Interval for an Assertion :} Given a choice of $\mathbf{I}(\tau)$ on a simulation trace $\tau$, the match truth interval $I_M$ for an assertion $\Phi$ is computed as follows:
		\begin{itemize}
			\item $ \Phi : \varphi_1$ {\tt|->} $\varphi_2$, $I_M =  \mathcal{M}_E(\varphi_1, \mathbf{I}(\tau))  \cap  \mathcal{M}_B(\varphi_2, \mathbf{I}(\tau)) $.
			
			\item  $ \Phi : \varphi_1${\tt|-> \#\#[a:b]}$\varphi_2 $, $I_M = \mathcal{M}_E(\varphi_1, \mathbf{I}(\tau)) \cap (\mathcal{M}_B(\varphi_2, \mathbf{I}(\tau)) \ominus [a:b]) $.
			\hfill$\QEDA$
		\end{itemize}
	\end{definition}
	
    \begin{theorem}	
    The match truth interval computed in Definition \ref{def:assertionMatchInterval} correctly computes matches according to the semantics defined in Definition~\ref{def:assertionMatchSemantics}.
    \end{theorem}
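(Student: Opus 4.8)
The plan is to reduce the statement to the two theorems already established for $\mathcal{M}_E$ and $\mathcal{M}_B$. Those results guarantee that, for the given choice $\mathbf{I}(\tau)$ of truth intervals, $\mathcal{M}_E(\varphi,\mathbf{I}(\tau))$ is exactly the set of end-match time points of $\varphi$ and $\mathcal{M}_B(\varphi,\mathbf{I}(\tau))$ is exactly the set of begin-match time points of $\varphi$. The non-vacuous match condition in Definition~\ref{def:assertionMatchSemantics} is a conjunction of one end-match condition on the antecedent $\varphi_1$ and one begin-match condition on the consequent $\varphi_2$, so the whole argument amounts to showing that the set operations in Definition~\ref{def:assertionMatchInterval} faithfully encode that conjunction. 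First I would fix notation by writing $\mathcal{M}_E(\varphi_1,\mathbf{I}(\tau))=[l_1:r_1]$ and $\mathcal{M}_B(\varphi_2,\mathbf{I}(\tau))=[l_2:r_2]$, as in the earlier proofs, so that each match set is a single interval.

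For the overlapping form $\varphi_1${\tt|->}$\varphi_2$ the argument is immediate: Definition~\ref{def:assertionMatchSemantics} says $t$ is a non-vacuous match iff $\tau(t)\models_e\varphi_1$ and $t$ is a begin-match of $\varphi_2$. By the end-match theorem the first conjunct is equivalent to $t\in\mathcal{M}_E(\varphi_1,\mathbf{I}(\tau))$, and by the begin-match theorem the second is equivalent to $t\in\mathcal{M}_B(\varphi_2,\mathbf{I}(\tau))$. The conjunction of the two memberships is exactly their intersection, giving $t\in I_M$ as claimed.

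The delayed form $\varphi_1${\tt|-> \#\#[a:b]}$\varphi_2$ is where the real work sits. Here a non-vacuous match at $t$ requires $t$ to be an end-match of $\varphi_1$ together with the existence of a begin-match $\hat{t}$ of $\varphi_2$ lying in the forward window $[t+a:t+b]$. I would discharge the existential quantifier by proving that, using the Minkowski difference $[l_2:r_2]\ominus[a:b]=[l_2-b:r_2-a]$, one has $t\in[l_2-b:r_2-a]$ iff $[t+a:t+b]\cap[l_2:r_2]\neq\emptyset$; expanding both sides reduces to the pair of inequalities $l_2-b\le t\le r_2-a$. This step is elementary but is the crux of the correctness claim, since it is exactly where the backward Minkowski shift is shown to encode ``some begin-match of $\varphi_2$ is reachable after a delay in $[a:b]$''. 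Intersecting the resulting set with $\mathcal{M}_E(\varphi_1,\mathbf{I}(\tau))$ then re-imposes the antecedent's end-match condition, yielding $I_M$.

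Finally I would tidy up the vacuity bookkeeping so the theorem aligns with Definition~\ref{def:assertionMatchSemantics}: points with $\tau(t)\not\models_e\varphi_1$ are absent from $\mathcal{M}_E(\varphi_1,\mathbf{I}(\tau))$ and hence from $I_M$, confirming that $I_M$ captures precisely the non-vacuous matches, with vacuous matches and failures governed separately by the truth of the antecedent. I expect the delayed case to be the main obstacle, and in particular verifying that $\ominus[a:b]$ translates the existential over the delay window into a single interval membership; the overlapping case and the vacuity bookkeeping are routine once the two preceding theorems are in hand.
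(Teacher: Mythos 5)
Your proposal is correct and follows essentially the same route as the paper: the overlapping case is handled by direct appeal to the preceding end-match and begin-match theorems, and the delayed case by showing that the Minkowski difference $\mathcal{M}_B(\varphi_2,\mathbf{I}(\tau))\ominus[a:b]$ encodes the existence of a begin-match of $\varphi_2$ in the window $[t+a:t+b]$. Your write-up is merely more explicit than the paper's (which states the equivalence $t\in[t'-b:t'-a]$ without expanding the interval inequalities), so no substantive difference.
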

    \begin{proof}
    Consider each statement in Definition~\ref{def:assertionMatchSemantics}.
    \begin{itemize}
        \item $ \Phi : \varphi_1$ {\tt|->} $\varphi_2$, Follows directly from Definition \ref{def:assertionMatchSemantics}
        
        \item  $ \Phi : \varphi_1${\tt|->\#\#[a:b]}$\varphi_2 $, $t \in I_M$ iff $t \models_e \varphi_1$ and $\exists t', t''$, $ t' \in \mathcal{B}(\tau, \varphi_2, t'') $ for $t''>t'$ and $t \in [t'-b : t'-a]$.
        Hence $I_M = \mathcal{M}_E(\varphi_1, \mathbf{I}(\tau)) \cap (\mathcal{M}_B(\varphi_2, \mathbf{I}(\tau)) \ominus [a:b]) $
        \hfill\qedhere
    \end{itemize}
    
    \end{proof}
	
	\subsection{Evaluating Property Matches}
	Consider the property describing the settling time of the output voltage {\tt Vout} for an arbitrary 
	circuit, as given below. Given the continuum of time when various predicates in the property are true, as 
	shown in Figure~\ref{fig : match}, we describe how the truth of the property is evaluated using the 
	definitions presented in the earlier sections. 
	
	\begin{verbatim}
	property SettlingTime{};
	  @+{V(Vout),0.1*1.2} |-> ##[0.001:0.004] 
	   {V(Vout)>=0.95*1.2 && V(Vout)<=1.05*1.2}[*0.002];
	endproperty
	\end{verbatim}
	
	Let the event in the antecedent be denoted as $E_1$ and the PORV in the consequent is denoted as $ P_1 $. We re-write the assertion as follows:
	{\tt ${\tt E_1}$ |-> \#\#[0.001:0.004] ${\tt P_1}$[$^*$0.002]}, where \\
	{\tt $\tt E_1$ : @${\tt ^+}$\{V(Vout),0.1*1.2\}}, and \\
	{\tt $\tt P_1$ : \{V(Vout)>=0.95*1.2~ \&\& ~V(Vout)<=1.05*1.2\}}
	
	\begin{figure}[t]
		\centering
		\includegraphics[width=\linewidth]{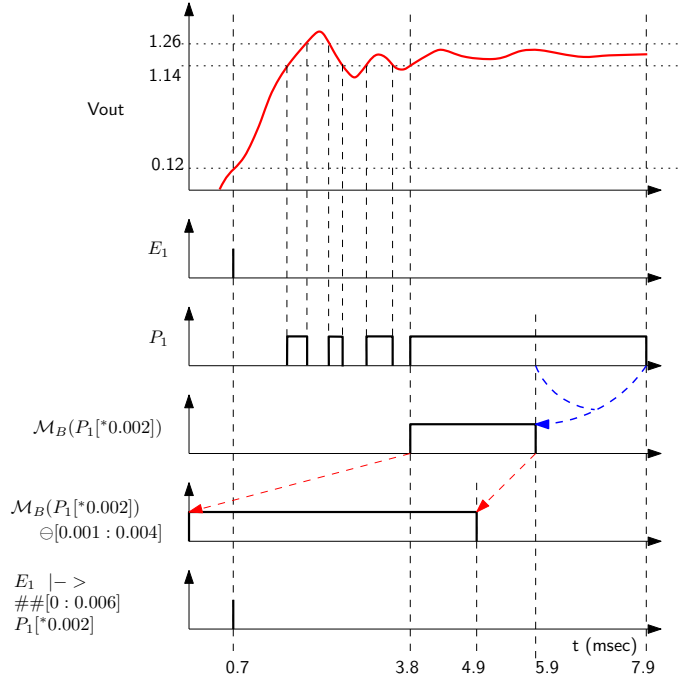}
		\caption{Bottom-up evaluation of assertions.}\label{fig : match}
	\end{figure}
	
	Let for trace $ \tau $, $ \mathcal{I}_{E_1}(\tau) = \langle [0.7ms:0.7ms] \rangle $ and $ \mathcal{I}_{P_1}(\tau) = \langle [1.6ms : 2.03ms], [2.41ms : 2.66ms], [3.04ms : 3.55ms], [3.8ms : 7.9ms] \rangle $. Figure \ref{fig : match} demonstrates the bottom-up 
	approach used for computing the truth of the assertion. The begin and end of $ E_1 $ and $ P_1 $ are 
	computed using Definition~\ref{def:beginEndMatch}. 
	Following the conditions mentioned in Section~\ref{subsec:intervalArith} the assertion is computed to be 
	true for the time interval $ [0.7ms : 0.7ms] $.

	
\section{The CHecker for AMS (CHAMS) Tool}
\label{sec:tool}

In this Section, we describe CHAMS, an online assertion checking tool for AMS. CHAMS works 
with off-the-shelf EDA tools to verify dense time AMS assertions during simulation. This
paper extends CHAMS with recurrence operators. An overview of the tool is shown in 
Fig.~\ref{fig:toolBlockDiag}. 

	\begin{figure}[t]
		\centering
		\includegraphics[width=\linewidth]{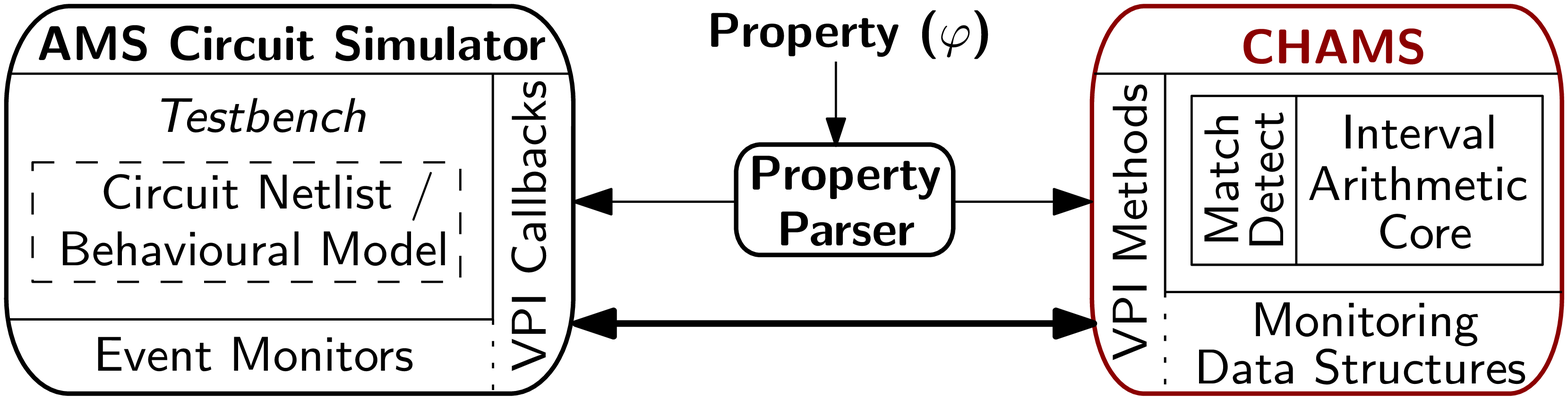}
		\caption{CHAMS Tool Flow}\label{fig:toolBlockDiag}
	\end{figure}	

	\subsection{Inputs}
	\label{subsec:inputs}
	
	The inputs to the tool are a circuit netlist/behavioural model, its testbench and the assertions that 
	need to be checked. Assertions are written in the syntax described in Section \ref{sec:AMS-SVA}. The 
	assertion specification is analysed to automatically generate monitor codes (as explained in the 
	following subsection) containing VPI-callback functions for the monitoring of events and PORVs affecting 
	the truth of the assertion.

	\subsection{Monitor Generation}\label{subsec : monitor_generation}
	
	In order to maintain truth intervals and thereby compute the truth of the assertion, CHAMS generates 
	Verilog-AMS (VAMS) monitors and injects them into the VAMS testbench for the circuit. Note that it is not 
	mandatory to have a VAMS testbench. In its absence monitor codes may also be placed in an independent 
	module having access to circuit ports. 
	
	A monitor consists of standard VPI callbacks which send information about the state of the circuit to the 
	checker CHAMS. CHAMS in turn maintains a data structure in which it updates the truth intervals for each 
	sub-expressions bottom-up. It uses interval arithmetic to do this, and thereby computes the truth of the 
	assertion. We demonstrate this using the property settling time described in the following example.

\begin{example}

\textbf{Rising Sequence:} If the {\tt enable} is asserted and the output voltage {\tt Vout} 
crosses $10\%$ of its rated value of $1.2V$ within $100\mu s$, then thereafter, within 
$1ms$ to $4ms$ {\tt Vout} must reach its {\em steady state} (explained below).
\begin{verbatim}
		property RisingSequence{};
		    @+{enable} ##[0:0.0001] @+{V(Vout),0.1*1.2} 
		        |-> ##[0.001:0.004] 
		    {V(Vout)>=0.95*1.2 && V(Vout)<=1.05*1.2}[*0.002];
		endproperty
\end{verbatim}
{\em 
Here {\em steady state} means, {\tt V(Vout)} remains within $\pm 5\%$ of the rated voltage $1.2V$ for at least $2ms$, and we express this requirement using the recurrence operator in the consequent of the above assertion. In time linear in the length of the property, CHAMS generates VAMS monitors, one for each event, and two for each predicate and Boolean expression. For the property above, the following monitors are generated.}
		\begin{verbatim}
		assign flag_2 = flag_2_0 && flag_2_1;
		always @(posedge enable) 
		    $checkerCall(0,0,$abstime);
		always @(cross(Vout-0.1*1.2,+1,1e-9,1e-6)
		    $checkerCall(0,1,$abstime);
		always @(cross(Vout-0.95*1.2,+1,1e-9,1e-6)
		    flag_2_0 = 1'b1;
		always @(cross(Vout-0.95*1.2,-1,1e-9,1e-6) 
		    flag_2_0 = 1'b0;
		always @(cross(Vout-1.05*1.2,+1,1e-9,1e-6) 
		    flag_2_1 = 1'b0;
		always @(cross(Vout-1.05*1.2,-1,1e-9,1e-6) 
		    flag_2_1 = 1'b1;
		always@(posedge flag_2) 
		    $updateTruthInterval(0,2,+1,$abstime);
		always@(negedge flag_2) 
		    $updateTruthInterval(0,2,-1,$abstime);	\end{verbatim}
{\em 
The callbacks indicated in these monitors invoke the CHAMS assertion checker code 
whenever any event relevant to the assertion is detected during simulation. The first
monitor issues a callback when {\tt enable} goes high, the second monitor issues a
callback when {\tt Vout} crosses $0.1*1.2V$, and the third monitor examines the truth
of the compound predicate {\tt V(Vout)>=0.95*1.2 \&\& V(Vout)<=1.05*1.2}.
} ~~\hfill$\QEDA$
\end{example}

For each sub-expression, CHAMS maintains begin match $(\mathcal{M}_B)$ 
and end match $(\mathcal{M}_E)$ intervals. When the begin and end match for both the antecedent and the consequent are available, the truth of the assertion is evaluated, as given in Definition \ref{def:assertionMatchInterval}. In general, for a sequence $\varphi_1 ~ \langle op \rangle ~ \varphi_2 $ containing sub-sequences $\varphi_1$ and $\varphi_2$, where $\langle op \rangle ~\in$ \{{\tt\#\#[a:b], [*a], [*a:b]}\}, treating $\langle op \rangle$ as left-associative, when $\varphi_1$ is found to be true, CHAMS monitors simulation progress for a time of at least $b$ (or $a$, for {\tt[*a]}) time units before deciding the match/fail of the sequence.

	\begin{definition}
		\textbf{Depth of a Sequence Expression}
		\label{def:depth}
		The depth of a sequence expression $\varphi$, recursively defined as follows:
		
		\begin{tabular}{lcll}
			$\mathcal{D}(\psi)$ &=& 0\\
			$\mathcal{D}(\varphi ~[^*a])$ &=& $\mathcal{D}(\varphi) + |a|$ \\
			$\mathcal{D}(\varphi_1 ~\#\#[a:b]\varphi_2)$ &=& $\mathcal{D}(\varphi_1) + \mathcal{D}(\varphi_2) + b$\\
			$\mathcal{D}(\varphi_1 ~[^*a:b]\varphi_2)$ &=& $\mathcal{D}(\varphi_1) + \mathcal{D}(\varphi_2) + b$\\
		\end{tabular}
		
		where $\psi$ is a Boolean expression.
		~\hfill$\QEDA$
	\end{definition}
	
	For an assertion $\varphi_1$ {\tt |->} $\varphi_2$, evaluated over trace $\tau$, to decide the truth of 
	the assertion at time point $t$, where $t \in \mathcal{M}_E(s_1, \mathbf{I}(\tau))$, CHAMS allows 
	simulation to progress upto $t+ \mathcal{D}(\varphi_2)$.

\section{Empirical Studies}
\label{sec:results}

AMS assertion checking is relevant in two contexts, one in which the analog components are 
transistor level netlists, and one in which the analog components are replaced by behavioural 
models for accelerating simulation at the full-chip level. In order to study the overhead of
our tool CHAMS, we have used two implementations of a Low Dropout Regulator (LDO) as test cases,
namely a light-weight behavioural model written in Verilog-AMS, and an industry standard 
transistor level netlist of the same LDO. Several properties were coded in AMSAL, of which two are 
shown as illustrative examples.

\begin{property}
{\bf Settling time :} \textit{The settling time of the LDO should be less than 6~ms. The
settling time is defined as the time taken by the system to settle down within $ \pm 5\%  $ of 
the rated voltage $3.2V$ and stay there for at least 2~ms.}
		\begin{verbatim}
		property SettlingTime{};
		    @+{V(Vout),0.1*3.2} |-> ##[0:0.006] 
		    {V(Vout)>=0.95*3.2 && V(Vout)<=1.05*3.2}[*0.002];
		endproperty
		\end{verbatim}
\end{property}
	
\begin{property}
{\bf Power Sequencing:} \textit{The power domain sourced by the LDO should not be enabled until 
the output of the LDO remains within $ \pm 5\%  $ of the rated voltage $3.2V$ for at least 10~ms.
The enable signal for the power domain sourced by the LDO is called {\tt en}.}
\end{property}
		
\begin{lstlisting}[basicstyle=\ttfamily,columns=fullflexible,escapechar=`]
property Power_Sequencing{};
    ~{V(Vout)>3.15 && V(Vout)<3.25} |-> 
    ~{@+{en}}[*0.01];
endproperty
\end{lstlisting}

Table~\ref{tab:results_ldo} shows the results of our empirical studies. In addition to the
LDO, we also examined an industrial Buck Regulator netlist. All simulations were run using 
Cadence on a 2.33 GHz Intel-Xeon server with 32GB RAM. Column~3 represents the accuracy of the generated Verilog-AMS cross events used for monitoring the assertions. Both the Verilog-AMS BMOD and the transistor netlist of the circuit are run for the simulation time given in the table's second column. The standalone simulation time of the circuit, without assertion monitoring using CHAMS, is given in the fourth column. The fifth column shows the simulation time when the circuit is run with the assertion checker tool. A comparison of the last two columns indicates the overhead of the assertion monitoring.

\begin{table}[t]
		\centering
		\caption{Results from Empirical Studies}
 
		\label{tab:results_ldo}
		\setlength\tabcolsep{1.5pt}
		\renewcommand{\arraystretch}{0.9}
		\begin{tabular}{c}
			\toprule
			\\[-6pt]
			{\bf DESIGN STATISTICS }\\ [-6pt] \\
			\begin{tabular}{c c c c c c}
				\midrule
				 &
				{\bf \#Nodes} &
				{\bf \#Transistors} &
				{\bf \#Resistors} &
				{\bf \#Capacitors} &
				{\bf \#Diodes} \\
				{\bf LDO Netlist} & 1434 	& 336	& 1269 &	861 	& 6 \\
				{\bf Buck Netlist} & 1787 	& 2455	& 495 &	350 	& 67 \\
			\end{tabular}
			
			\\ \bottomrule
			\\ [-2pt]
			{\bf SIMULATION RESULTS } \\ [-6pt] \\
		   \toprule
		   \begin{tabular}{c c c c c c}
				& 
				&     
				& \multicolumn{3}{c}{\textbf{CPU Sim.Time}} \\ \cline{4-6}
				&
				&
				&
				&
				&
				\\
				\multirow{-3}{*}{\textbf{Assertions}}                                                                                 & \multirow{-3}{*}{\textbf{\begin{tabular}[c]{@{}c@{}}Simulation\\ Time\end{tabular}}} & \multirow{-3}{*}{\textbf{\begin{tabular}[c]{@{}c@{}}Cross Event\\ Accuracy \\ (sec, Volt)\end{tabular}}} & \multirow{-2}{*}{\textbf{Ckt. only}} & \multirow{-2}{*}{\textbf{\begin{tabular}[c]{@{}c@{}}Ckt. + \\ CHAMS\end{tabular}}} & \multirow{-2}{*}{\textbf{\begin{tabular}[c]{@{}c@{}}\%\\ Overhead\end{tabular}}} \\ \toprule
				\multicolumn{6}{c}{\textbf{LDO Behavioural Model in Verilog-AMS}}   \\
				\bottomrule
				
				&
				&
				&
				& 
				&                                                                                  \\
				&
				& \multirow{-2}{*}{1e-4, 1e-3}
				&
				& \multirow{-2}{*}{11m 05s} 
				& \multirow{-2}{*}{7.78} \\
				&
				&
				&
				&
				&
				\\
				
				& 
				& \multirow{-2}{*}{1e-6, 1e-4}
				& 
				& \multirow{-2}{*}{11m 57s}                    
				& \multirow{-2}{*}{16.21} \\
				&
				&
				& 
				&
				& \\
				\multirow{-6}{*}{\begin{tabular}[c]{@{}c@{}}RisingSequence,\\ Settling Time,\\ Overshoot,\\ Power\_Seq\end{tabular}} & 
				\multirow{-6}{*}{300ms} & 
				\multirow{-2}{*}{1e-9, 1e-6} & 
				\multirow{-6}{*}{10m 17s} & 
				\multirow{-2}{*}{13m 38s} &
				\multirow{-2}{*}{32.57} \\ 
			\toprule
				\multicolumn{6}{c}{\textbf{LDO Transistor Level Netlist}}
				\\
				\bottomrule
				&
				&
				&
				&
				& \\
				& 
				& \multirow{-2}{*}{1e-4, 1e-3}
				&
				& \multirow{-2}{*}{22m 46s}
				& \multirow{-2}{*}{10.61} \\
				&
				&
				&
				&
				& \\
				&
				& \multirow{-2}{*}{1e-6, 1e-4}
				& 
				& \multirow{-2}{*}{23m 42s}
				& \multirow{-2}{*}{15.14} \\
				&
				& 
				&
				&
				& 
				\\
				\multirow{-6}{*}{\begin{tabular}[c]{@{}c@{}}RisingSequence,\\ Settling Time,\\ Overshoot\end{tabular}}
				& \multirow{-6}{*}{40ms}
				& \multirow{-2}{*}{1e-9, 1e-6}
				& \multirow{-6}{*}{20m 35s}
				& \multirow{-2}{*}{26m 7s}
				& \multirow{-2}{*}{26.88} \\ 
			\toprule
			\multicolumn{6}{c}{\textbf{Buck Regulator Transistor Level Netlist}}   \\
			\bottomrule
			& & & & & \\
			&
			& \multirow{-2}{*}{1e-4, 1e-3}
				&
				& \multirow{-2}{*}{3h 10m 14s} 
				& \multirow{-2}{*}{1.16} \\
				&
				&
				&
				&
				&
				\\
				
				& 
				& \multirow{-2}{*}{1e-6, 1e-4}
				& 
				& \multirow{-2}{*}{3h 11m 07s}                    
				& \multirow{-2}{*}{1.63} \\
				&
				&
				& 
				&
				& \\
				\multirow{-6}{*}{\begin{tabular}[c]{@{}c@{}}RisingSequence,\\ Settling Time,\\ Overshoot\end{tabular}} & 
				\multirow{-6}{*}{250 $\mu$s} & 
				\multirow{-2}{*}{1e-9, 1e-6} & 
				\multirow{-6}{*}{3h 8m 3s} & 
				\multirow{-2}{*}{3h 17m 53s} &
				\multirow{-2}{*}{5.23} \\ \bottomrule
			\end{tabular}
		\end{tabular}
		
	\end{table}

	The overhead shown in the results are largely due to the multiple VPI callbacks that have to be executed 
	to accurately maintain the dense truth intervals for the PORVs. The major takeaway is that the assertion 
	monitoring is online, hence a failure will be reported at the very instance when the assertion fails. 
	Thus the designer can stop the simulating in-between whenever an assertion fails. Another appealing 
	feature of CHAMS is that it is a completely automated tool which requires minimum user intervention.

\section{Related Work}\label{sec:related}
In our past work on AMS-LTL we extended temporal logic to express assertions for AMS~\cite{AMS_LTL} and further proposed adding property variables to the logic in AMS-LTL$^L$~\cite{AMS_LTL-L}. We also describe mechanisms for AMS property checking using standard simulators~\cite{assertion-aware-sampling} and proposed using auxillary state machines to aid in testing and verification flows. In Ref.~\cite{debugging-window} we discuss how dense-time debugging windows for property matches and failures may be chosen and refined. We also introduced a language a mechanism for the analysis of features~\cite{dyfet,forfet}. Features are quantitative properties for AMS systems that specify a set of behaviours over which analog measurements are computed. The language of features is developed to be similar to SVA, which is a widely used property specification language for discrete domains in the Semiconductor industry. SVA, itself, was primarily developed for digital systems and therefore is limited in its capacity to express AMS properties.

In work proposed by other groups, the proposed languages either do not support recurrences or 
lack the ability to monitor properties online with off-the-shelf AMS circuit simulators. In 
Ref.~\cite{AMT, amt_tool}, an assertion monitoring tool is presented which analyzes output signals 
to compute the truth of assertions. The assertion language is developed by incorporating Timed 
Regular Expression (TRE)~\cite{TRE} into STL and suitable algorithms are presented for their match computation \cite{TRE_matching, TRE_derivative_based}. Although TREs offer a rich set of operators, they lack support for \textit{events}, \textit{delays}, and restricts $I$ in $\langle \varphi \rangle _I$ to have integer endpoints only. Authors in \cite{automata_skipping, automata_skipping_nfm} have designed an FJS-type online algorithm for computing whether a timed word matches a given specification, taken in the form of a timed automaton. In our approach, we use interval arithmetic and support a language with artifacts from circuits. Unlike other offerings, CHAMS  can interact with industrial circuit simulators to guarantee accuracy of property matches. Furthermore, given the fact that we primarily target the semiconductor industry as our use-case, and the fact that SVA is already extensively used in the field, adoption of an SVA-like language would be easier for verification engineers.
Other attempts have been made to express properties for AMS in PSL and evaluate these using 
modern analog simulators such as Spectre~\cite{prabal}. Assertions written on analog ports can 
be evaluated on a predefined digital clock, or at an analog event (such as a \textit{cross} event 
in Verilog-AMS) introducing analog to the assertion checking process. However, dense-time temporal 
properties can not be expressed in the proposed format.

Property languages for AMS over dense-time, developed using SVA-like syntax do not presently 
support specification of recurrence. Deciding property matches and failures for properties 
involving recurrent behaviours, that hold continuously over a period of time, requires a different 
mechanism when compared with those without recurrence.
	
\section{Conclusion}
\label{sec:conclude}
	
We believe that recurrence over continuous time is necessary to express many properties of AMS
designs. Our proposal for recurrence operators in AMS addresses this requirement. We deliberately
choose a syntax similar to SVA for ease of verification engineers conversant with SVA. However
properties expressed in our language AMSAL are evaluated by our tool CHAMS using interval 
arithmetic as opposed to cycle based evaluation of SVA. The ability of CHAMS to work with
standard commercial circuit simulators has facilitated its integration into the verification tool
flow of multiple semiconductor companies.

\balance

\end{document}